  \numberwithin{equation}{section}
  \theoremstyle{definition}  
   \newtheorem{defn}{Definition}[section]
   \newtheorem{rmk}[defn]{Remark}
  \theoremstyle{plain}  
   \newtheorem{thm}[defn]{Theorem}
   \newtheorem{lem}[defn]{Lemma}
   \newtheorem{prop}[defn]{Proposition}
   \newtheorem{cor}[defn]{Corollary}
  \theoremstyle{remark} 
   \newcommand{\bell}{\bm{\ell}}
   \newcommand{\bk}{\mathbf{k}}
   \newcommand{\br}{\mathbf{r}}
   \newcommand{\bs}{\mathbf{s}}
   \newcommand{\bu}{\mathbf{u}}
   \newcommand{\bv}{\mathbf{v}}
   \newcommand{\CK}{\mathcal{K}}
   \newcommand{\BC}{\mathbb{C}}
   \newcommand{\BR}{\mathbb{R}}
   \newcommand{\BZ}{\mathbb{Z}}
 \newcommand{\numberthis}{\refstepcounter{equation}\tag{\theequation}}  
\newcommand\footnoteref[1]{\protected@xdef\@thefnmark{\ref{#1}}\@footnotemark}
\newcommand{\emailaddress}[1]{\newline{\sf#1}}
\let\tr\relax 
\DeclareMathOperator{\tr}{Tr}
\let\ker\relax 
\DeclareMathOperator{\ker}{Ker}
\DeclareMathOperator{\ran}{Ran}
\DeclareMathOperator{\im}{Im}
\DeclareMathOperator{\supp}{Supp}
\DeclareMathOperator{\spn}{span}
\title{Petz-R\'enyi Relative Entropy of Thermal States and their  Displacements}
\author[1]{George Androulakis}
\author[2]{Tiju Cherian John}
\affil[1]{University of South Carolina, Columbia, South Carolina, USA \emailaddress{giorgis@math.sc.edu}}
\affil[2]{The University of Arizona, Tucson, Arizona, USA \emailaddress{ tijucherian@fulbrightmail.org}}
\begin{document}
\maketitle
\tableofcontents
\begin{abstract}
In this letter, we obtain the precise range of the values of the parameter $\alpha$ such that  Petz-Rényi $\alpha$-relative entropy $D_{\alpha}(\rho||\sigma)$ of two faithful displaced thermal states is finite. More precisely, we prove that, given two displaced thermal states $\rho$ and $\sigma$ with inverse temperature parameters  $r_1, r_2,\dots, r_n$ and $s_1,s_2, \dots, s_n$,  respectively, $0<r_j,s_j<\infty$, for all $j$, we have
 \begin{align*}
      D_{\alpha}(\rho||\sigma)<\infty \Leftrightarrow \alpha < \min \left\{ \frac{s_j}{s_j-r_j}: j \in \{ 1, \ldots , n \} \text{ such that } r_j<s_j \right\},
  \end{align*}  where we adopt the convention that the minimum of an empty set is equal to infinity. This result is particularly useful in the light of operational interpretations of the Petz-Rényi $\alpha$-relative entropy in the regime $\alpha>1 $.  Along the way, we also prove a special case of a conjecture of Seshadreesan, Lami, and Wilde (J. Math. Phys. 59, 072204 (2018)).
  \newline
\textbf{Keywords:}  Quantum relative entropy,  Petz-Rényi $\alpha$-relative entropy,   Nussbaum-Szko{\l}a distributions, gaussian states, thermal states\\
 \textbf{2020 Mathematics Subject classification:}
 Primary 81P17; Secondary 81P99.
\end{abstract}
\section{Introduction}
Quantum state discrimination is an indispensable tool in quantum communication theory. 
Petz-Rényi $\alpha$-relative entropy is a useful tool for discriminating two quantum states \cite{Nussbaum-Szkola-2009, Mosonyi-2009, Ses-Lam-Wil-2018}, and it generalizes the classical Rényi $\alpha$-relative entropy \cite{renyi1961}. For the purpose of this article, for $\alpha\in (0,1) \cup (1,\infty)$, we define Petz-Rényi $\alpha$-relative entropy of two quantum states $\rho$ and $\sigma$, $D_{\alpha}(\rho||\sigma)$, as\begin{align*}
  D_{\alpha}(\rho||\sigma)= \begin{cases}
      \frac{1}{\alpha-1}\log \tr \rho^{\alpha}\sigma^{\alpha-1}, & \text{ when } \supp \rho\subseteq \supp \sigma\\
      \infty, & \text{ otherwise. }
  \end{cases}  
\end{align*} For a detailed account on the definition and basic properties of this notion, one may refer to Petz \cite{petz1986quasientropy} who first defined it, or to \cite[Examples 3.3 (2)]{Androulakis2023-lv} and \cite{Androulakis2023-na}.  

The theory of quantum gaussian states   has been getting more attention recently, in the context of its importance in the continuous variable  quantum information theory \cite{weedbrook-et-al-2012}. Several authors have obtained formulas for the Petz-Rényi $\alpha$-relative entropy, $D_{\alpha}(\rho||\sigma)$ between gaussian states $\rho$ and $\sigma$  \cite{Mosonyi-2009,Ses-Lam-Wil-2018, Par2021a}.  For such states,  it is known that  \[ D_{\alpha}(\rho ||\sigma)<\infty, \quad \forall \alpha \in (0,1].\] However, the precise range of $\alpha$ beyond $1$, where the Petz-Rényi $\alpha$-relative entropy is finite for gaussian states is not known.

In this article, we study the regime $\alpha>1$ of the Petz-Rényi $\alpha$-relative entropy for gaussian states. In this case, Petz-Rényi $\alpha$-relative entropy has an operational interpretation in connection with optimal quantum encoding; see \cite[Theorem 7]{bellomo2017lossless}, which we describe now. Consider a quantum source that emits pure states $\rho_i$ with certain probabilities
$p_i$. Thus, the average state emitted by the quantum source is equal to the mixed state $\rho=\sum_i p_i \rho_i$. The $t$-exponential length of a codeword is defined in \cite[Definition 6]{bellomo2017lossless}. If we use the optimal 
encoding for $\rho$, then the $t$-exponential length $\ell_t^{\text{opt}}$ of the encoding of the average state emitted by the source satisfies 
$$
D_{\frac{1}{1+t}} (\rho) \leq \ell_t^{\text{opt}} \leq D_{\frac{1}{1+t}}(\rho) +1,
$$ where for $\alpha>0$, $D_{\alpha}(\rho)$ is the Petz-Rényi $\alpha$-entropy of $\rho$, (see 
\cite[Theorem 6]{bellomo2017lossless}). 
On the other hand, if we use the quantum version of the Shannon encoding in order to encode the average state that is emitted by the 
quantum source, and moreover, if we construct that quantum Shannon encoding based on a mixed state $\tau$ which is potentially different than $\rho$, then 
the $t$-exponential length $\ell_t^\text{Sh}$ of the encoding of the average state emitted by the quantum source satisfies 
\begin{equation} \label{eq:bellomo-interpretation}
D_{\frac{1}{1+t}}(\rho)+D_{1+t}\left(\rho_t \| \tau_t\right) \leq \ell_t^{\text{Sh}}<D_{\frac{1}{1+t}}(\rho)+D_{1+t}\left(\rho_t \| \tau_t\right)+1
\end{equation}
where $\rho_t:=(\tr \rho^{\frac{1}{1+t}})^{-1}\rho^{\frac{1}{1+t}}$ and also $\tau_t$ is defined similarly, (see \cite[Theorem 7]{bellomo2017lossless}).  Equation~\eqref{eq:bellomo-interpretation} gives an operational interpretation of the 
Petz-Renyi $\alpha$-relative entropy for $\alpha >1$. Notice that if the states $\rho$ and $\tau$ are thermal, then the states 
$\rho_t$ and $\tau_t$ are  also thermal.

Furthermore, the Petz-Rényi $\alpha$-relative entropy is known to satisfy the data processing inequality in the range of $1<\alpha\leq 2$ , see \cite[Theorem 1.1, part 2]{Zhang2020-gk} and \cite{camilo-landi-gabriel-2019}. Moreover, the Petz-Rényi $2$-relative entropy is particularly useful for Gaussian states (see, for instance, \cite{adesso-girolami-serafini-2012, Camilo2019-zl} and the Introduction in \cite{Iosue2023-qe}).
The reason that the Petz-Rényi 2-relative entropy is preferred occasionally than the Umegaki relative entropy for state distinguishability is because it is the largest among the Petz-Rényi relative entropies that satisfy the data processing
inequality.

Thus, for the reasons mentioned in the previous two paragraphs, it is important to know the range of $\alpha$'s beyond $1$ for which the Petz-Rényi $\alpha$-relative entropy is finite. In particular, it is also important to know the answer to this question when the states are thermal. In this article, we explicitly find this range for thermal states and their displacements (Theorems \ref{thm:not-necessarily-faithful} and \ref{thm:main-faithful}). In addition, our investigations led to the settlement of a special case of a recent conjecture by Seshdreesan, Lami, and Wilde \cite{Ses-Lam-Wil-2018}.  The authors of \cite{Ses-Lam-Wil-2018} prove that given two faithful gaussian states $\rho$ and $\sigma$, and  $\alpha>1$, $D_\alpha(\rho||\sigma)$ is finite  if the covariance matrix of the gaussian state $\frac{\rho^\alpha}{\tr \rho^\alpha}$ is strictly less  than that of the gaussian state $\frac{\sigma^{\alpha-1}}{\tr \sigma^{\alpha-1}}$. They conjectured that the converse of this statement is also true. In this article, we verify this conjecture for displaced faithful thermal states (Theorem \ref{thm:converse-SLW}). Unfortunately, the methods adopted in this article produce difficult expressions when we consider more general gaussian states other than displaced thermal states. Hence, in this article, we restrict ourselves to thermal states and their displacements.

There are two main mathematical tools that we use in obtaining our results in this article. The first is the well-known spectral decomposition of thermal states. The second tool is an analysis of the matrix entries of the Weyl unitary operators (a.k.a. displacement operators) in the particle basis using a theorem of Fejér from classical analysis (Propositions \ref{prop:weyl-leguerre} and \ref{weyl-diagonal} in this article).

The structure of this article is as follows. In Section \ref{sec:preliminaries}, we discuss the spectral decomposition of multimode thermal states after providing the necessary mathematical background. In Section \ref{sec:main-results}, we include the main results of this article where Subsection \ref{sec:thermal} discusses the case of thermal states, and Subsection \ref{sec:displaced-thermal} provides results for displaced faithful thermal states.
\section{Preliminaries}\label{sec:preliminaries}

\subsection{Thermal states}
  Let $a$ and $a^{\dagger}$ denote the annihilation and creation operators of a single-mode system.  These operators enjoy the following properties (refer \cite{Par12, Par10} for details).
\begin{enumerate}
\item The  matrix representation of the  annihilation operator $a$ and the creation operator $a^{\dagger}$ in the particle basis $\{\ket{0},  \ket{1}, \ket{2}, \dots\}$ is given as,
  \begin{eqnarray}
    \label{eq:37}
a = 
    \begin{bmatrix}
      0 &1 &0&0&\cdots\\
 0&0&\sqrt{2}&0& \cdots\\ 
0&0&0&\sqrt{3}&\cdots \\
0&0&0&0&\cdots \\
\cdots&\cdots&\cdots&\cdots&\cdots
    \end{bmatrix}; \phantom{....}a^{\dagger} =
    \begin{bmatrix}
      0 &0 &0&0&\cdots\\
 1&0&0&0& \cdots\\ 
0&\sqrt{2}&0&0&\cdots \\
0&0&\sqrt{3}&0&\cdots\\
\cdots&\cdots&\cdots&\cdots&\cdots
    \end{bmatrix}.
  \end{eqnarray}
In other words we have, \begin{align}
\label{eq:6.3}
a = \sum\limits_{j=1}^{\infty}\sqrt{j}\ketbra{(j-1)}{j},&\phantom{..}  a^{\dagger} = \sum\limits_{j=1}^{\infty}\sqrt{j}\ketbra{j}{(j-1)}.
\end{align}

\item The operator $a^{\dagger}a$ defined on the finite particle domain and extended as a selfadjoint operator to its maximal domain is called the \emph{number \index{number operator}operator}. The number operator has a diagonal representation in the particle basis given by
\begin{equation}
\label{eq:38}
a^{\dagger}a \ket{j} = j\ket{j}, j\in\BZ_{\geq 0},
\end{equation}
where $\BZ_{\geq 0}$  denotes the set of all nonnegative integers.  Equivalently, 
\begin{align*}
a^{\dagger}a =  \sum\limits_{j=0}^{\infty} j \ketbra{j}{j}
\end{align*}
with the  matrix representation  in the particle basis  given as, \begin{align*}
    a^{\dagger}a =   \begin{bmatrix}
     0&&&& \\
      &1 &&\mbox{\huge 0}&\\
& &{2}&& \\ 
&\mbox{\huge 0}&&{3}& \\
&&&&\ddots \\
    \end{bmatrix}.
\end{align*}
Thus, $a^{\dagger}a$ has a complete set of orthonormal eigenvectors, $\{\ket{0}, \ket{1}, \ket{2}, \dots\}$.
\item For $0<s\leq \infty$, define the bounded operator $e^{-s a^{\dagger}a}$ to be the operator that satisfies  
\begin{equation}
\label{eq:6.4} 
\begin{cases}
e^{-s a^{\dagger}a} \ket{j}= e^{-sj}\ket{j}, \textnormal{ for } j=0, 1, 2, \dots, & \textnormal{when }0<s<\infty ;\\
\phantom{..}e^{-s a^{\dagger}a}\phantom{...} = \ketbra{0}, & \textnormal{when }s=\infty.
\end{cases}
\end{equation}
\item The trace of $e^{-s a^{\dagger}a}$ can be computed as the sum of eigenvalues. Hence 
\begin{align*}
\tr (e^{-s a^{\dagger}a}) &= 1+e^{-s}+ (e^{-s})^2+(e^{-s})^3+\cdots\\
&= \frac{1}{1-e^{-s}}.
\end{align*}
Therefore, the operator $(1-e^{-s})e^{-sa^\dagger a}$ is a state (i.e., a positive operator with unit trace)
\end{enumerate}
\begin{defn}[Thermal State]
 Let $0<s\leq \infty$, define
\begin{equation}\label{eq:thermal-s}
    \gamma(s)= 
    \begin{cases}
    (1-e^{-s})e^{-sa^{\dagger}a}, & 0<s<\infty\\
    \ketbra{0}, & s=\infty.
    \end{cases}
\end{equation}
Then $\gamma(s)$ is called a $1$-mode thermal state with inverse temperature $s$. If $0<s_j\leq \infty$, for $j=1,2, \dots,n$, then the product state \begin{align}\label{eq:n-mode-thermal}
    \gamma(\bs) = \otimes_{j=1}^n\gamma(s_j).
\end{align}
is called an $n$-mode thermal state with inverse temperature $s_j$ on the $j$-th mode.
\end{defn}
\begin{rmk}\label{rmk:thermal} \begin{enumerate}
   \item \label{rmk-thermal-covariance-1}
  It may be recalled \cite{Par10}, that $\gamma(\bs)$ is an $n$-mode (quantum) gaussian state with mean zero and covariance matrix  as the diagonal matrix \begin{align}
   V = \frac{1}{2}\bmqty{\dmat{\coth\frac{s_1}{2},\ddots,\coth\frac{s_n}{2}}} \otimes I_2,
\end{align} 
where $\coth \infty$ is interpreted as $1$.
   \item \label{rmk-thermal-covariance-2}
   Note that any positive power of the one mode thermal state $\gamma(s)$ is also a trace class operator. Furthermore, we have \[\frac{\gamma(s)^\alpha}{\tr \gamma(s)^\alpha} = \gamma(\alpha s),\quad \forall \alpha >0.\] 
   By taking tensor products we have for $\bs=(s_1,s_2,\dots,s_n)$, with $0<s_j\leq \infty$, \[\frac{\gamma(\bs)^\alpha}{\tr \gamma(\bs)^\alpha} =\gamma(\alpha \bs), \]
   where $\alpha\bs = (\alpha s_1,\alpha s_2,\dots,\alpha s_n). $ Therefore, the covariance matrix associated with the state $\frac{\gamma(\bs)^\alpha}{\tr \gamma(\bs)^\alpha} $ is \begin{equation}
       \label{eq:rho-alpha-cov}
       V(\alpha)  = \frac{1}{2}\bmqty{\dmat{\coth\frac{\alpha s_1}{2},\ddots,\coth\frac{\alpha s_n}{2}}} \otimes I_2,
   \end{equation}
   where $\coth \infty$ is interpreted as $1$.
\end{enumerate}
\end{rmk}
\subsubsection{Spectral decomposition of Thermal States}
Due to \eqref{eq:6.4} and the  definition of a thermal state, the spectral decomposition of the one mode thermal state $\gamma(s)$ is given by 
\begin{equation}\label{eq:1-mode-spectral}
    \gamma(s) = \begin{cases}
      (1-e^{-s})\sum\limits_{k\in \BZ_{\geq0}}e^{-ks}\ketbra{k}{k}, & 0<s<\infty;\\
\phantom{................} \ketbra{0}, & s=\infty.
    \end{cases}
\end{equation}
Since the $n$-mode thermal states are tensor product of $1$-mode thermal states, the spectral decomposition of these states are straight-forward. 

We adopt the following conventions to make the exposition simpler: \[0\cdot \infty = 0, \quad \quad e^{-\infty} = 0.\] 
With these conventions, we have 
\begin{align} \label{beq:1-mode-spectrum}
\gamma(s)\ket{\ell} = (1-e^{-s})e^{-\ell s}\ket{\ell}, \quad \forall \ell\in \BZ_{\geq 0}, \text{ and } 0<s\leq \infty .
\end{align} To express the eigenvalues of $\gamma(\bs)= \otimes_{j=1}^n\gamma(s_j)$, 
 we introduce the following notation.
For $\bm{\ell}\in \BZ_{\geq 0}^{n}$, define $\lambda(\bm{\ell},\bs)$ by
\begin{equation}\label{beq:5}
\lambda(\bm{\ell},\bs)=
   \left(\prod_{j=1}^{n}(1-e^{-s_j})\right)e^{-\bm{\ell}\cdot\bs}
\end{equation}
where $\bm{\ell}\cdot\bs = \sum_{j=1}^n \ell_j s_j$. 
Therefore, the action of $\gamma(\bs)=\otimes_{j=1}^n \gamma(s_j)$ on the particle
basis vector \[\ket{\bm{\ell}}:=\ket{\ell_1}\otimes \ket{\ell_2}\otimes\cdots\otimes\ket{\ell_n}\]  is
given by
\begin{align*}
\gamma(\bs)\ket{\bm{\ell}}=\otimes_{j=1}^n \gamma(s_j)\ket{\bm{\ell}} =\lambda(\bell,\bs)\ket{\bell} &=
 \left(\prod_{j=1}^{n}(1-e^{-s_j})\right)e^{-\bm{\ell}\cdot\bs}\ket{\bm{\ell}}.
\end{align*}
In particular, the kernel of $\gamma(\bs)$ is the closure of the span of those particle basis vectors $\ket{\bell}$ where the coordinates of $\bell\in \BZ_{\geq 0}^n$   corresponding to $s_j=\infty$ are non-zero. We have 
\begin{align}
    \ker (\gamma(\bs))&=\spn \left\{\ket{\bell}: \ell= (\ell_1,\dots,\ell_n) \in \BZ_{\geq 0}, s_j=\infty \Rightarrow \ell_j>0 \right\} \label{eq:ker-gamma-s}\\
    \ran (\gamma(\bs))&= \spn \left\{\ket{\bell}: \ell= (\ell_1,\dots,\ell_n) \in \BZ_{\geq 0}, s_j=\infty \Rightarrow \ell_j=0 \right\} \label{eq:ran-gamma-s}.
\end{align}
The spectral decomposition of $\gamma(\bs)$ is given by \begin{align}\label{beq:n-mode-spectral}
    \begin{split}
        \gamma(\bs) &= \sum\limits_{\bm{\ell}\in\BZ_{\geq 0}^{n} } \lambda(\bm{\ell}, \bs) \ketbra{\bm{\ell}}.
    \end{split}
\end{align}

\subsection{Weyl Unitary Operators and Displaced Thermal States} The thermal states that we described in the previous sections are states with zero mean. They can be transformed to states with any non-zero mean vector (without altering the covariance matrix) by conjugating them with the \textit{Weyl unitary operators}. Now we define the Weyl unitary operators associated with a $n$-mode system. \begin{defn} The \emph{exponential vector} $\ket{e(\bv)}$ at $\bv\in \BC^n$ is defined as \[\ket{e(\bv)} = \sum_{k\in \BZ_{\geq 0}}^{\infty}\frac{\bv^{\otimes^k}}{\sqrt{k!}}.\] Then the \emph{Weyl unitary operator} or the \emph{displacement operator} $W(\bu)$ at $\bu \in \BC^n$ is defined as
\begin{align}\label{eq:weyl-exp}
    W(\bu)\ket{e(\bv)} = \exp{-\frac12 \norm{\bu}^2- \braket{\bu}{\bv}}\ket{e(\bu +\bv)}, \quad \forall \bu, \bv \in \BC^n.
\end{align}
  A state of the form \[W(\bu)\gamma(\br)W(\bu)^{\dagger}\]
is called a \emph{displaced thermal state}.
\end{defn}
The following properties of the Weyl unitary operators and displaced thermal states are important for our exposition.
\begin{enumerate}
    \item Weyl unitary operators enjoy the following properties which makes the map $\bu\mapsto W(\bu)$  a projective unitary representation of the additive group $\BC^n$, \begin{align}\label{eq:weyl-representation}\begin{split}
    W(\bu)^{\dagger} &= W(-\bu) \\
        W(\bu)W(\bv) &= \exp(-i \im \left\langle \bu|\bv \right\rangle)W(\bu+\bv)\left.\right\}, \quad \forall \bu, \bv \in \BC^n.
        \end{split} 
    \end{align}
    \item The covariance matrix of the thermal state $\gamma(\bs)$, remains invariant under the conjugation with Weyl unitary operator. In other words, any displaced thermal state has the same covariance matrix as that of the corresponding thermal state. 
    \item The spectral decomposition of the displaced thermal state $W(\bu)\gamma(\br)W(\bu)^{\dagger}$ is given by 
\begin{align}\label{beq:displaced-thermal-spectral}
    \begin{split}
W(\bu)\gamma(\br)W(\bu)^{\dagger} &=\sum\limits_{{\bk}\in\BZ_{\geq 0}^{n} } \lambda({\bk}, \br) \ketbra{W(\bu){\bk}}.    \end{split}
    \end{align}
\end{enumerate}
\subsection{Formula for Petz-Rényi Relative Entropy of Thermal States and their Displacements}
 Let $\rho$ and $\sigma$ be any two states (not necessarily thermal states) on a Hilbert space $\CK$  with spectral decomposition \begin{align}\label{eq:spectral-rho-sigma}\begin{split}
    \rho &= \sum_{i\in \mathcal{I}}r_i \ketbra{u_i}, \quad r_i\geq 0,\quad \sum_{i\in \mathcal{I}} r_i = 1,\quad \{u_i\}_{i\in \mathcal{I}} \text{ is an orthonormal basis of }  \mathcal{K};\\
    \sigma &= \sum_{j\in \mathcal{I}}s_j \ketbra{v_j}, \quad s_j\geq0, \quad \sum_{j\in \mathcal{I}} s_j = 1, \quad \{v_j\}_{j\in \mathcal{I}}\text{ is an orthonormal basis of } \mathcal{K}.
    \end{split}
\end{align}
 Then by   \cite[Theorem 2.6]{Androulakis2023-na}, for $\alpha \in (0,1)\cup (1,\infty)$, the Petz-R\'enyi $\alpha$-relative entropy is given by
 \begin{align}\label{eq:alpha-relative-entropy}
   D_{\alpha}(\rho||\sigma)&= \frac{1}{\alpha-1} \log \sum\limits_{i,j\in \mathcal{I}} r_i^\alpha s_j^{1-\alpha}\abs{\braket{u_i}{v_j}}^2, \quad \forall \alpha \in (0,1) \cup (1,\infty),
\end{align}
where for $\alpha>1$, we adopt the conventions $0^{1-\alpha} = \infty$ and $0\cdot \infty = 0$. In particular, if $u_i = v_i$ for all $i$, that is, if both the states $\rho$ and $\sigma$ are diagonalized in the same orthonormal basis, then $\abs{\braket{u_i}{v_j}}^2 = \delta_{ij}$ and we have \begin{align}\label{eq:alpha-relative-entropy-1}
   D_{\alpha}(\rho||\sigma)&= \frac{1}{\alpha-1} \log \sum\limits_{i} r_i^\alpha s_i^{1-\alpha}, \quad \forall \alpha \in (0,1) \cup (1,\infty).
\end{align} 
\begin{rmk}\label{rmk:support-condition}When $\sigma$ has a non-trivial kernel, that is, $s_j=0$ for some $j$, then for $\alpha>1$, the term $s_j^{1-\alpha}$ is infinity by our convention. In this case, the corresponding term $r_i^\alpha s_j^{1-\alpha}\abs{\braket{u_i}{v_j}}^2$ in the sum in \eqref{eq:alpha-relative-entropy} is infinity (implying $D_{\alpha}(\rho||\sigma) = \infty$) unless $\braket{u_i}{v_j}= 0$ or $r_i=0$ for every $i$. Thus, for $\alpha >1$, the condition \begin{equation}\label{eq:support-condition}
    \forall j \text{ such that } s_j= 0, \text{ either } \braket{u_i}{v_j}= 0 \text{ or } r_i=0 , \quad \forall i 
\end{equation}
is a necessary condition for the finiteness of $D_\alpha(\rho||\sigma)$. It is easy to see that the condition \eqref{eq:support-condition} is equivalent to the condition \[\supp \rho \subseteq \supp \sigma\] (for a proof, the reader may refer to \cite[Proof of Lemma 3.3, Appendix A.2]{Androulakis-John-2022a} or \cite[Lemma 3.11]{Androulakis2023-lv}). Thus, if the condition $\supp \rho \subseteq \supp \sigma$ is satisfied, then the formula \ref{eq:alpha-relative-entropy} reduces to  \begin{align}\label{eq:alpha-relative-entropy-support-cond}
   D_{\alpha}(\rho||\sigma)&= \frac{1}{\alpha-1} \log \sum\limits_{\{i,j\in \mathcal{I}| r_i\neq 0, s_j \neq 0\}} r_i^\alpha s_j^{1-\alpha}\abs{\braket{u_i}{v_j}}^2, \quad \forall \alpha \in (0,1) \cup (1,\infty),
\end{align}
\end{rmk}
In the case of thermal states,  $\gamma(\br) =\otimes_{j=1}^n \gamma(r_j)$  and $\gamma(\bs)=\otimes_{j=1}^n\gamma(s_j)$, both these states have a diagonal representation with respect to the particle basis. Therefore, we have the following:
    \begin{align}\label{eq:12}
&D_\alpha(\gamma(\br) \lvert\lvert \gamma(\bs)) =\frac{1}{\alpha-1}\log \sum\limits_{\bk\in \BZ_{\geq 0}^n}\lambda(\bk,\br)^\alpha\lambda(\bk,\bs)^{(1-\alpha)},  
\end{align} where 
$\lambda(\bk,\br)$ and $\lambda(\bk,\bs)$ are as in (\ref{beq:5}).
\begin{prop}\label{prop:d-alpha-displaced-thermal}
     Let $\rho$ and $\sigma$ be $n$-mode displaced thermal states such that \begin{align*}
        \rho&= W(\bu_1)\gamma(\br)W(\bu_1)^{\dagger},\\
        \sigma& = W(\bu_2)\gamma(\bs)W(\bu_2)^{\dagger}, 
    \end{align*}
where $\bu_1 , \bu_2 \in \BC^{n}$. Then \[D_{\alpha}(\rho||\sigma) = \frac{1}{\alpha-1}\log\sum\limits_{\bk, \bm{\ell}\in \BZ_{\geq 0}^n }\lambda(\bk,\br)^\alpha\lambda(\bm{\ell},\bs)^{(1-\alpha)}\abs{\braket{W(\bu)\bk}{\bm{\ell}}}^2,\]
where $\bu = \bu_1-\bu_2$.
\end{prop}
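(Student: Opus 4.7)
The plan is to apply the general Petz-Rényi formula \eqref{eq:alpha-relative-entropy} to the spectral decompositions of $\rho$ and $\sigma$ furnished by \eqref{eq:displaced-thermal-spectral}, and then reduce the resulting overlap coefficients using the Weyl commutation relations \eqref{eq:weyl-representation}.

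First, I would note that since $W(\bu_1)$ and $W(\bu_2)$ are unitary, the families $\{W(\bu_1)\ket{\bk}\}_{\bk\in\BZ_{\geq 0}^n}$ and $\{W(\bu_2)\ket{\bell}\}_{\bell\in\BZ_{\geq 0}^n}$ are orthonormal bases of the $n$-mode Fock space. Consequently \eqref{eq:displaced-thermal-spectral} provides bona fide spectral decompositions of the form required by \eqref{eq:spectral-rho-sigma}, and substituting into \eqref{eq:alpha-relative-entropy} immediately yields
\begin{equation*}
D_\alpha(\rho||\sigma) = \frac{1}{\alpha-1}\log\sum_{\bk,\bell\in\BZ_{\geq 0}^n}\lambda(\bk,\br)^\alpha\lambda(\bell,\bs)^{1-\alpha}\abs{\braket{W(\bu_1)\bk}{W(\bu_2)\bell}}^2.
\end{equation*}

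Second, I would simplify the overlap factor. Writing $\braket{W(\bu_1)\bk}{W(\bu_2)\bell} = \bra{\bk}W(\bu_1)^{\dagger}W(\bu_2)\ket{\bell}$ and invoking \eqref{eq:weyl-representation}, I obtain $W(\bu_1)^\dagger W(\bu_2) = W(-\bu_1)W(\bu_2) = e^{i\im\left\langle \bu_1|\bu_2\right\rangle}\,W(\bu_2-\bu_1)$. Since $\bu = \bu_1 - \bu_2$, the operator on the right equals $e^{i\theta}W(-\bu)$ for a real phase $\theta$, and $W(-\bu) = W(\bu)^\dagger$. The phase is annihilated by taking the absolute value squared, leaving $\abs{\bra{\bk}W(\bu)^\dagger\ket{\bell}}^2 = \abs{\braket{W(\bu)\bk}{\bell}}^2$, which is precisely the expression asserted in the proposition.

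There is no genuine obstacle here: the proposition is essentially a bookkeeping consequence of combining \eqref{eq:displaced-thermal-spectral} with \eqref{eq:alpha-relative-entropy}. The only care point is the handling of possible zero eigenvalues when some coordinate of $\br$ or $\bs$ equals $\infty$, but this is absorbed by the conventions $0^{1-\alpha}=\infty$ and $0\cdot\infty=0$ adopted in \eqref{eq:alpha-relative-entropy}, while the question of when the resulting double sum is actually finite is governed by the support condition recalled in Remark \ref{rmk:support-condition}, a subtlety that will be exploited later rather than addressed in this proposition.
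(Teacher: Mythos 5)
Your proposal is correct and follows essentially the same route as the paper: apply the general formula \eqref{eq:alpha-relative-entropy} to the spectral decompositions \eqref{eq:displaced-thermal-spectral} and then use the Weyl relations \eqref{eq:weyl-representation} to reduce $\abs{\braket{W(\bu_1)\bk}{W(\bu_2)\bell}}^2$ to $\abs{\braket{W(\bu)\bk}{\bell}}^2$. You simply spell out the phase cancellation that the paper leaves implicit, and your computation of that phase is accurate.
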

\begin{proof}Note that the eigenvalues of $\rho$ are same as that of $\gamma(\br)$ and the eigenvector corresponding to the eigenvalue $\lambda(\bk, \br)$ is now $W(\bu_1)\ket{\bk}$. Similarly, the eigenvector corresponding to the eigenvalue $\lambda(\bell, \bs)$ of $\sigma$ is $W(\bu_2)\ket{\bell}$.
Now by \eqref{eq:alpha-relative-entropy},  and \eqref{beq:displaced-thermal-spectral}, \[D_{\alpha}(\rho||\sigma) = \frac{1}{\alpha-1}\log\sum\limits_{\bk, \bm{\ell}\in \BZ_{\geq 0}^n }\lambda(\bk,\br)^\alpha\lambda(\bm{\ell},\bs)^{(1-\alpha)}\abs{\braket{W(\bu_1)\bk}{W(\bu_2)\bm{\ell}}}^2, \]  which together with \eqref{eq:weyl-representation} completes the proof. 
\end{proof}
\section{Main Results}\label{sec:main-results}
\subsection{Thermal States}\label{sec:thermal}
Given two gaussian states $\rho$ and $\sigma$,  there exist thermal states $\gamma(\br)$ and $\gamma(\bs)$ and gaussian unitary operators $U$ and $V$ such that  \[\rho = U\gamma(\br)U^{\dagger}, \quad \text{and}\quad V\gamma(\bs)V^{\dagger},\]\cite{Par10, BhJoSr18}.  It is easy to see from \eqref{beq:5}  that  the eigenvalues $\lambda(\bk,\br)$ of $\gamma(\br)$ are exponential in $\bk$
and tend to zero for large $\bk$, therefore from \eqref{eq:alpha-relative-entropy} the Petz-Rényi $\alpha$-relative entropy of two gaussian states $\rho$ and $\sigma$ satisfies  \begin{equation}\label{eq:d-alpha-finite}
    D_{\alpha}(\rho||\sigma) <\infty, \quad \forall \alpha\in (0,1],
\end{equation} (see also \cite{Mosonyi-2009, Ses-Lam-Wil-2018}).  But the precise range of $\alpha$ beyond $1$ for which  $D_{\alpha}(\rho||\sigma) <\infty$ is not known. Our next theorem settles this question for thermal states. 
It will be convenient to state our results by introducing the following function $g: (0, \infty] \times (0,\infty] \to (1, \infty]$ given by
\begin{align}\label{eq:g}
    g(r,s)= \left\{ \begin{array}{ll} \frac{s}{s-r}, & \text{if } 0<r<s<\infty\\ \infty, & \text{otherwise.} \end{array}\right.  
\end{align}

\begin{thm}\label{thm:not-necessarily-faithful} Let  $\gamma(\br) =\otimes_{j=1}^n \gamma(r_j)$  and $\gamma(\bs)=\otimes_{j=1}^n\gamma(s_j)$
 be thermal states, $0<r_j\leq\infty$, $0<s_j\leq\infty$, such that $\supp \gamma(\br)\subseteq \supp\gamma(\bs)$.
Then, 
$$
D_\alpha (\gamma(\br) || \gamma(\bs) ) < \infty \Leftrightarrow \alpha < \min_{1\leq j \leq n} g(r_j,s_j).
$$
\end{thm}
\begin{proof} Because of \eqref{eq:d-alpha-finite}, assume $\alpha\geq 1$.
From the definition of
$\lambda(\bk,\br)$ and $\lambda(\bk,\bs)$ in (\ref{beq:5}), and \eqref{eq:12},
$D_{\alpha}(\rho||\sigma)$ is finite if and only if 
\begin{equation}\label{eq:sum}
    \sum\limits_{\bk\in \BZ_{\geq 0}^n} \lambda(\bk, \br)^{\alpha} \lambda(\bk,\bs)^{(1-\alpha)}<\infty.
\end{equation} 
Notice that if $s_j=\infty$ for some $j$, then the condition $\supp(\gamma(\br))\subseteq \supp(\gamma(\bs))$ implies that $r_j=\infty$ as well. Thus, $\lambda(\bk, \bs) = 0$ for some $\bk$ implies $\lambda(\bk,\br)=0$. Therefore, if $s_j=\infty$ for some $j$, then the terms $\lambda(\bk, \br)^{\alpha} \lambda(\bk,\bs)^{(1-\alpha)}$ become $0\cdot \infty = 0$, for all $\bk=(k_1,\dots, k_n)$ with $k_j>0$, and they do not contribute to the expression \eqref{eq:sum}.

Also, note that if there is a $j$ such that $r_j=\infty$, then $\lambda(\bk,\br) = 0$ for all $\bk =(k_1, \ldots , k_n)$ with $k_j>0$ and as before those terms do not contribute non-zero terms to the expression \eqref{eq:sum}. 

To summarize the previous two paragraphs, if $s_j=\infty$ or $r_j=\infty$ for some $j$, the only terms that possibly have nonzero contributions to the sum in \eqref{eq:sum} correspond to those $\bk$'s where $k_j=0$. 
On the other hand, notice that \[ \bk \cdot \br = \sum_{ \{ i: k_i>0 \} } k_i r_i \text{ and } 
\bk \cdot \bs = \sum_{ \{ i: k_i>0 \} } k_i s_i ,\]
because we do not need to write the $0 \cdot \infty $ terms (i.e. the $j$'s such that $r_j=\infty$ or $s_j=\infty$ and $k_j=0$) in the dot products $\bk \cdot \br$ and $\bk \cdot \bs$. Hence, if 
\[ \{ j: r_j,s_j<\infty \} =
\{j_1,j_2,\dots, j_p\},\] 
then  \begin{align*}
  \lambda(\bk, \br)^{\alpha} \lambda(\bk,\bs)^{(1-\alpha)}& =  C\exp{-\alpha \bk\cdot \br} \exp{-(1-\alpha) \bk\cdot \bs}\\
  &=C\exp{-\alpha\left(\sum\limits_{i=1}^pk_ir_{j_i}\right)}\exp{-(1-\alpha)\left(\sum\limits_{i=1}^pk_is_{j_i}\right)},
\end{align*}
where the constant term $C$ depends only on $\br, \bs,$ and $\alpha$.
Now we have
\begin{align*}
&D_\alpha (\gamma(\br) ||\gamma(\bs))<\infty \\
  \Longleftrightarrow& \sum\limits_{k_{1},\dots, k_p \in \BZ_{\geq
                       0}}\exp{-\alpha\left(\sum\limits_{i=1}^pk_ir_{j_i}\right)}\exp{-(1-\alpha)\left(\sum\limits_{i=1}^pk_is_{j_i}\right)}<\infty
  \\
   \Longleftrightarrow& \prod_{i=1}^p
                        \left(\sum\limits_{k_{i} \in \BZ_{\geq
                        0}}\exp{-\left(\alpha
                        r_{j_i}+(1-\alpha)s_{j_i}\right)k_i}\right)<\infty\\
  \Longleftrightarrow&  \sum\limits_{k_{i} \in \BZ_{\geq
                        0}}\exp{-\left(\alpha
                        r_{j_i}+(1-\alpha)s_{j_i}\right)k_i}<\infty, \quad \forall  i\\
  \Longleftrightarrow&  \alpha
                       r_{j_i}+(1-\alpha)s_{j_i}>0, \quad \forall  i\\
 \Longleftrightarrow& \alpha
                     (s_{j_i}- r_{j_i})< s_{j_i}, \quad \forall
                      i.\numberthis \label{eq:21}
\end{align*}
Now there are two cases:  either $r_{j_{i}}\geq s_{j_i}, \forall i,$  or  $r_{j_i}< s_{j_i}$  for some $i$. If $r_{j_{i}}\geq s_{j_i}, \forall i,$ 
then (\ref{eq:21}) is automatically satisfied for all
$\alpha>0.$  Since $g(r_j,s_j)=\infty$ for all $j$, the statement of the theorem is proved in this case.  If there exists a $j_i$ such that $r_{j_i}< s_{j_i}$  
then (\ref{eq:21}) is satisfied for some
$\alpha>0$ if and only if $\alpha< \min_{i}\left\{\frac{s_{j_i}}{s_{j_i}-r_{j_i}}\right\} = \min_{j}g(r_j,s_j)$ and the statement of the theorem is also verified in this.
\end{proof}
The next remark rephrases the above theorem in more detail.

\begin{rmk}\label{rmk:main-thm}
       \begin{enumerate}
\item\label{item:13}
  If  $r_j\geq s_j$ for all $j$
   such that $r_j$ and $s_j$ are finite,
  then
\[D_{\alpha}(\gamma(\br)||\gamma(\bs))<\infty, \quad \forall \alpha >0.\] The condition $r_j\geq s_j$  means that the temperatures $\frac{1}{r_j}\leq \frac{1}{s_j}$ for all modes that are in nonzero temperature. This is one definite way to obtain a finite Petz-Rényi $\alpha$-entropy for all $\alpha$.
\item \label{item:15-rmk} If $\{  j : r_j<s_j<\infty \} \not = \emptyset$,  then 
\begin{align}
\label{eq:16}
D_{\alpha}(\gamma(\br)||\gamma(\bs))<\infty \Longleftrightarrow \alpha< \min\left\{\frac{s_j}{s_j-r_j}: r_j<s_j<\infty\right\}.
\end{align}
\end{enumerate}
\end{rmk}
\begin{rmk} 
 It is an interesting special case  to note that if $r_i$'s and $s_i$'s satisfy
two properties \begin{enumerate}
    \item $r_j<\infty \Rightarrow s_j = \infty$, and
    \item  $s_k < \infty \Rightarrow r_k = \infty$,
\end{enumerate}   then
\begin{align*}
D_{\alpha}(\gamma(\br)||\gamma(\bs))<\infty, \quad \forall \alpha,
\end{align*} 
i.e., if we prepare $\rho$ and $\sigma$ such that for each mode of $\rho$ with finite non zero temperature prepare the corresponding mode of $\sigma$ to be vacuum and vice versa, then this ensures that Petz-R\'enyi $\alpha$-relative entropy is finite for all $\alpha$. 

\end{rmk}


 It is well known that the covariance matrix $V\in M_{2n}(\BR)$ of a $n$-mode quantum state satisfies the inequality,\[V+iJ\geq 0,\] where $J=\bmqty{0&I_n\\-I_n&0}$ is the matrix of the standard symplectic form on $\BR^{2n}$.    Now let $\rho$ and $\sigma$ be gaussian states with covariance matrices $V_{\rho}$ and $V_{\sigma}$ respectively. For $\alpha>1$, let $V_{\sigma(\alpha-1)}$ and $V_{\rho(\alpha)}$ denote the covariance matrices of the gaussian states $\frac{\sigma^{\alpha-1}}{\tr \sigma^{\alpha-1}} $ and $\frac{\rho^{\alpha}}{\tr \rho^{\alpha}}$, respectively. The assumption of the strict inequality between the covariance matrices that appears in the next lemma is vacuous if the state $\sigma$ is not faithful. Moreover, the algebraic calculations that appear in the proof of the lemma become problematic if some of the inverse temperature parameters of the states $\rho$ or $\sigma$
 are equal to infinity. Hence, in the next lemma we assume that the states  
 are faithful.  
\begin{lem}\label{lem:slw}
    Let $\rho =\otimes_{i=1}^n \gamma(r_i)$  and $\sigma=\otimes_{j=1}^n\gamma(s_j)$  be $n$-mode faithful thermal states, i.e., $ 0<r_i,s_i<\infty$ for all $i$. For $\alpha>1$, let $V_{\sigma(\alpha-1)}$ and $V_{\rho(\alpha)}$ denote the covariance matrices of the gaussian states $\frac{\sigma^{\alpha-1}}{\tr \sigma^{\alpha-1}} $ and $\frac{\rho^{\alpha}}{\tr \rho^{\alpha}}$, respectively. Then we have the equivalence:
 $$
V_{\sigma(\alpha-1)}>V_{\rho(\alpha)}\Longleftrightarrow \alpha < \min_{1\leq j\leq n}g(r_j,s_j).$$
\end{lem}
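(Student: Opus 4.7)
My plan is to reduce the operator inequality $V_{\sigma(\alpha-1)} > V_{\rho(\alpha)}$ to a system of scalar inequalities via the explicit diagonal form of these covariance matrices, and then invoke the monotonicity of $\coth$ on $(0,\infty)$.

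First, I would apply Remark \ref{rmk:thermal}\eqref{rmk-thermal-covariance-2} (equation \eqref{eq:rho-alpha-cov}) to both $\rho$ and $\sigma$. Since $0<r_i<\infty$ and $0<s_j<\infty$ and $\alpha>1$, the parameters $\alpha r_j$ and $(\alpha-1)s_j$ are strictly positive and finite, so
\[
V_{\rho(\alpha)} = \tfrac{1}{2}\bmqty{\dmat{\coth\tfrac{\alpha r_1}{2},\ddots,\coth\tfrac{\alpha r_n}{2}}}\otimes I_2, \quad V_{\sigma(\alpha-1)} = \tfrac{1}{2}\bmqty{\dmat{\coth\tfrac{(\alpha-1)s_1}{2},\ddots,\coth\tfrac{(\alpha-1)s_n}{2}}}\otimes I_2.
\]
Both matrices are diagonal with positive entries, so their difference is positive definite if and only if each diagonal entry of $V_{\sigma(\alpha-1)}-V_{\rho(\alpha)}$ is strictly positive. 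Hence
\[
V_{\sigma(\alpha-1)}>V_{\rho(\alpha)} \iff \coth\tfrac{(\alpha-1)s_j}{2} > \coth\tfrac{\alpha r_j}{2}, \quad \forall j\in\{1,\ldots,n\}.
\]

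Next, since $\coth$ is strictly decreasing on $(0,\infty)$ and both arguments lie in $(0,\infty)$, the previous condition is equivalent to $(\alpha-1)s_j < \alpha r_j$ for every $j$, which rearranges to
\[
\alpha(s_j-r_j) < s_j, \quad \forall j\in\{1,\ldots,n\}.
\]

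Finally, I would split on the sign of $s_j-r_j$. If $s_j\leq r_j$ then $\alpha(s_j-r_j)\leq 0 < s_j$ holds automatically for every $\alpha>1$, so such indices impose no constraint. If $s_j>r_j$, then the inequality is equivalent to $\alpha < s_j/(s_j-r_j)$. Taking the minimum of these bounds over the indices $j$ with $r_j<s_j$ (and adopting the convention that the minimum of the empty set is $+\infty$, which then leaves $\alpha$ unconstrained beyond $\alpha>1$) yields exactly the stated equivalence. There is no real obstacle here; the lemma is essentially a packaging of equation \eqref{eq:rho-alpha-cov} together with the monotonicity of $\coth$, and the only point requiring mild care is noting that the strict inequality of diagonal operators reduces entrywise precisely because the two matrices are simultaneously diagonal in the standard basis.
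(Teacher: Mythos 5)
Your proposal is correct and follows essentially the same route as the paper's own proof: it invokes equation \eqref{eq:rho-alpha-cov} from Remark \ref{rmk:thermal}, reduces the operator inequality entrywise using the simultaneous diagonal form, applies the strict monotonicity of $\coth$ on $(0,\infty)$ to obtain $(\alpha-1)s_j<\alpha r_j$ for all $j$, and then splits on the sign of $s_j-r_j$. No gaps; your explicit remark that the entrywise reduction is valid because the matrices are simultaneously diagonal is a small clarification the paper leaves implicit.
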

\begin{proof}
    By \ref{rmk-thermal-covariance-2} in Remark \ref{rmk:thermal} we have \begin{align*}
  V_{\sigma(\alpha-1)}  &= \frac{1}{2}\bmqty{\dmat{\coth\frac{(\alpha-1) s_1}{2},\ddots,\coth\frac{(\alpha-1) s_n}{2}}} \otimes I_2, \text{ and}\\    V_{\rho(\alpha)}  &= \frac{1}{2}\bmqty{\dmat{\coth\frac{\alpha r_1}{2},\ddots,\coth\frac{\alpha r_n}{2}}} \otimes I_2.
\end{align*} Since the function $\coth$ is monotonically decreasing on $(0,\infty)$, we have
    \begin{align}\label{eq:proof}
\begin{split}
   &V_{\sigma(\alpha-1)}>V_{\rho(\alpha)}\\
  &\Leftrightarrow \oplus_j \frac{1}{2}\coth s_j(\alpha-1) >\oplus_j \frac{1}{2}\coth \alpha r_j\\
  &\Leftrightarrow  s_j(\alpha-1) < r_j\alpha, \quad \forall j\\
 & \Leftrightarrow  (s_j-r_j)\alpha <s_j, \quad \forall j
 \end{split}
 \end{align}
 Note that the last displayed condition is satisfied for all $\alpha$ if $r_j \geq s_j$ for all $j \in \{ 1, \ldots , n \}$. 
 Otherwise, if there are some $j$'s such that $r_j < s_j$, then the last displayed condition is equivalent to  
$$
\alpha < \min \left\{ \frac{s_j}{s_j-r_j}: j \in \{ 1, \ldots , n \} \text{ such that } r_j<s_j \right\} .
$$
\end{proof}
\begin{rmk} \label{rmk:V_inequality}
Under the assumptions of Lemma~\ref{lem:slw}, its conclusion can be reformulated more explicitly as follows:
$$
V_{\sigma(\alpha-1)}>V_{\rho(\alpha)}\Longleftrightarrow \alpha < \min \left\{ \frac{s_j}{s_j-r_j}: j \in \{ 1, \ldots , n \} \text{ such that } r_j<s_j \right\},
$$
where we adopt the convention that the minimum of an empty set is equal to infinity.
\end{rmk}
Seshadreesan, Lami and Wilde \cite[Theorem 19]{Ses-Lam-Wil-2018}
proved that, \[V_{\sigma(\alpha-1)}>V_{\rho(\alpha)}\Rightarrow D_{\alpha}(\rho||\sigma)<\infty.\] Furthermore, it was conjectured that \cite[Section VIII]{Ses-Lam-Wil-2018} the above implication may also hold in the other direction. Here we prove this conjecture for thermal states and their generalized displacements (i.e., conjugation with Weyl unitary operators). Before proving the general case, let us first consider the special case where the states $\rho$ and $\sigma$ are thermal states. 
\begin{prop}\label{prop:finite-entropy-thermal}
  Let $\rho =\otimes_{i=1}^n \gamma(r_i)$  and $\sigma=\otimes_{j=1}^n\gamma(s_j)$  be $n$-mode faithful thermal states, i.e., $ 0<r_i,s_i<\infty$ for all $i$. For $\alpha>1$, let $V_{\sigma(\alpha-1)}$ and $V_{\rho(\alpha)}$ denote the covariance matrices of the gaussian states $\frac{\sigma^{\alpha-1}}{\tr \sigma^{\alpha-1}} $ and $\frac{\rho^{\alpha}}{\tr \rho^{\alpha}}$, respectively. Then 
  \begin{align}
      D_{\alpha}(\rho||\sigma)<\infty \Leftrightarrow V_{\sigma(\alpha-1)}>V_{\rho(\alpha)}.
  \end{align}
\end{prop}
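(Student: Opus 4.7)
The proof plan is essentially to combine the two results that immediately precede the proposition. Corollary~\ref{thm:special-case} (which is the faithful specialization of Theorem~\ref{thm:not-necessarily-faithful}) characterizes finiteness of $D_\alpha(\rho\|\sigma)$ for faithful thermal states purely in terms of the inverse temperature parameters, stating that
\[
D_\alpha(\rho\|\sigma)<\infty \iff \alpha < \min\left\{\tfrac{s_j}{s_j-r_j} : j\in\{1,\dots,n\},\ r_j<s_j\right\},
\]
with the usual convention that the minimum of an empty set is $+\infty$. Lemma~\ref{lem:slw} gives exactly the same numerical condition on $\alpha$ as an equivalent description of the covariance-matrix inequality $V_{\sigma(\alpha-1)}>V_{\rho(\alpha)}$.

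So the plan is: first invoke Corollary~\ref{thm:special-case} to rewrite the left-hand side of the equivalence I wish to prove as the explicit inequality on $\alpha$. Then invoke Lemma~\ref{lem:slw} to rewrite the right-hand side as the same explicit inequality on $\alpha$. Transitivity of $\iff$ closes the argument. Both cited results require only that $\rho$ and $\sigma$ be faithful thermal states, which is exactly our hypothesis ($0<r_i,s_i<\infty$ for all $i$), so there are no additional hypotheses to check.

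There is essentially no obstacle: the work has been front-loaded into Theorem~\ref{thm:not-necessarily-faithful} (and hence Corollary~\ref{thm:special-case}) and Lemma~\ref{lem:slw}. The only thing worth being careful about is to handle cleanly the degenerate case in which $r_j\geq s_j$ for every $j$ — in that situation both sides of the equivalence hold for \emph{every} $\alpha>1$ (the minimum is interpreted as $+\infty$ in Lemma~\ref{lem:slw}, and item~\ref{item:13.1} of Corollary~\ref{thm:special-case} gives finiteness of $D_\alpha$ for all $\alpha>0$), so the equivalence is trivially true. In the complementary case, where $\mathcal{F}_{\br<\bs}\neq\emptyset$, both sides reduce to the identical strict inequality $\alpha<\min_{j:r_j<s_j}\frac{s_j}{s_j-r_j}$. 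The proof can therefore be written in two lines, simply citing the two preceding results.
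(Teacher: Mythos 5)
Your proposal is correct and matches the paper's own proof, which consists of the single line ``Follows immediately from Lemma~\ref{lem:slw} and Corollary~\ref{thm:special-case}.'' Your additional care with the degenerate case $\mathcal{F}_{\br<\bs}=\emptyset$ is a welcome but inessential elaboration of the same argument.
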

\begin{proof} 
Follows immediately from Remarks \ref{rmk:main-thm}  and \ref{rmk:V_inequality}.
\end{proof}
\subsection{Displaced Thermal States}\label{sec:displaced-thermal}
In the following theorem we verify a conjecture of Seshadreesan, Lami and Wilde \cite{Ses-Lam-Wil-2018} for the special case of displaced faithful thermal states.
\begin{thm}\label{thm:converse-SLW}
     Let $\rho$ and $\sigma$ be displaced faithful thermal states given by  \begin{align*}
        \rho&= W(\bu_1)\gamma(\br)W(\bu_1)^{\dagger},\\
        \sigma& = W(\bu_2)\gamma(\bs)W(\bu_2)^{\dagger} 
    \end{align*}
where $\bu_1 , \bu_2 \in \BC^{n}$, $\gamma(\br) =\otimes_{i=1}^n \gamma(r_i)$,  and $\gamma(\bs)=\otimes_{j=1}^n\gamma(s_j)$. For $\alpha>1$, let $V_{\sigma(\alpha-1)}$ and $V_{\rho(\alpha)}$ denote the covariance matrices of the gaussian states $\frac{\sigma^{\alpha-1}}{\tr \sigma^{\alpha-1}} $ and $\frac{\rho^{\alpha}}{\tr \rho^{\alpha}}$, respectively. Then we have \begin{align}
      D_{\alpha}(\rho||\sigma)<\infty \Rightarrow V_{\sigma(\alpha-1)}>V_{\rho(\alpha)}.
  \end{align}
\end{thm}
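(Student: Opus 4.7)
I would proceed by contrapositive: assume $V_{\sigma(\alpha-1)}\not>V_{\rho(\alpha)}$ and show $D_\alpha(\rho||\sigma)=\infty$. Running the algebraic chain of equivalences \eqref{eq:proof} (from the proof of Lemma \ref{lem:slw}) in the negative direction produces an index $j_0\in\{1,\ldots,n\}$ with $r_{j_0}<s_{j_0}$ and $\alpha r_{j_0}+(1-\alpha)s_{j_0}\leq 0$. By Proposition \ref{prop:d-alpha-displaced-thermal} together with the faithfulness of $\rho$ and $\sigma$ (so that $\mathcal{F}_{\br}=\mathcal{F}_{\bs}=\{1,\ldots,n\}$ and the constants in $\lambda(\bk,\br)$, $\lambda(\bell,\bs)$ are strictly positive finite), finiteness of $D_\alpha(\rho||\sigma)$ for $\alpha>1$ is equivalent to finiteness of
\[
T=\sum_{\bk,\bell\in\BZ_{\geq 0}^n}e^{-\alpha\bk\cdot\br-(1-\alpha)\bell\cdot\bs}\abs{\braket{W(\bu)\bk}{\bell}}^2,\qquad \bu=\bu_1-\bu_2,
\]
so it suffices to prove $T=\infty$.

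The next step is to use the factorization $W(\bu)=\bigotimes_{j=1}^nW(u_j)$ and $\ket{\bk}=\bigotimes_j\ket{k_j}$, which makes the exponential weights and the squared matrix entries multiplicative over modes. This yields $T=\prod_{j=1}^n T_j$ with
\[
T_j=\sum_{k,\ell\geq 0}e^{-\alpha r_j k-(1-\alpha)s_j\ell}\abs{\braket{W(u_j)k}{\ell}}^2.
\]
Each $T_j$ is bounded below by the $k=\ell=0$ contribution $\abs{\braket{W(u_j)0}{0}}^2=e^{-|u_j|^2}>0$, so $T<\infty$ forces $T_{j_0}<\infty$, and it is enough to show that $T_{j_0}=\infty$ whenever $\alpha r_{j_0}+(1-\alpha)s_{j_0}\leq 0$, regardless of $u_{j_0}\in\BC$. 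To see this, I would discard off-diagonal terms and keep only $k=\ell$:
\[
T_{j_0}\geq\sum_{k\geq 0}e^{-(\alpha r_{j_0}+(1-\alpha)s_{j_0})k}\abs{\braket{W(u_{j_0})k}{k}}^2,
\]
where by hypothesis the exponential factor is $\geq 1$ for every $k$, so the question reduces to divergence of $\sum_k\abs{\braket{W(u_{j_0})k}{k}}^2$.

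For $u_{j_0}=0$ this is immediate since each term equals $1$. For $u_{j_0}\neq 0$ I would invoke Proposition \ref{prop:weyl-leguerre} to identify the diagonal matrix entries with a Laguerre polynomial expression in $|u_{j_0}|^2$, and then appeal to Proposition \ref{weyl-diagonal}, which is built on Fej\'er's classical theorem, to supply the averaged/Ces\`aro lower bound on those Laguerre terms that forces the series $\sum_k\abs{\braket{W(u_{j_0})k}{k}}^2$ to diverge. This last step is the main obstacle: the Hilb-type pointwise asymptotic is insufficient because $L_k(|u_{j_0}|^2)$ oscillates in $k$, so one genuinely needs the Fej\'er-type averaged estimate packaged in the earlier propositions. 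Chaining the divergences back up yields $T_{j_0}=\infty$, hence $T=\infty$ and $D_\alpha(\rho||\sigma)=\infty$, contradicting the hypothesis and completing the contrapositive.
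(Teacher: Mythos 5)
Your proposal is correct and follows essentially the same route as the paper's own proof: reduce via Proposition \ref{prop:d-alpha-displaced-thermal} to the weighted double sum, restrict to the diagonal $\bell=\bk$, factorize over modes, and invoke the lower bound on $\abs{\mel{k}{W(u)}{k}}$ from Proposition \ref{weyl-diagonal} to force $\alpha r_{j}+(1-\alpha)s_{j}>0$ for every mode. The only differences are cosmetic (contrapositive framing, and factorizing before rather than after taking the diagonal), and your final step rests on exactly the same subsequence estimate that the paper itself uses.
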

\begin{proof}
By Proposition \ref{prop:d-alpha-displaced-thermal}, $D_{\alpha}(\rho||\sigma)<\infty$ implies that \[\sum\limits_{\bk, \bm{\ell}\in \BZ_{\geq 0}^n }\lambda(\bk,\br)^\alpha\lambda(\bm{\ell},\bs)^{(1-\alpha)}\abs{\braket{W(\bu)\bk}{\bm{\ell}}}^2<\infty,\] where $\bu = \bu_1 -\bu_2$. Since the series of positive numbers above is convergent, we have in particular, the diagonal series \[\sum\limits_{\bk \in \BZ_{\geq 0}^n }\lambda(\bk,\br)^\alpha\lambda(\bm{\bk},\bs)^{(1-\alpha)}\abs{\braket{W(\bu)\bk}{\bm{\bk}}}^2<\infty.\] By the definition of $\lambda(\bk,\br)$ in (\ref{beq:5}), this is now same as \begin{align}\label{eq:intermediate}
    \sum\limits_{\bk \in \BZ_{\geq 0}^n }e^{-\bk\cdot (\alpha\br  +(1-\alpha)\bs)}\abs{\braket{W(\bu)\bk}{\bm{\bk}}}^2<\infty.
\end{align} Note that if $\bu = (u_1, u_2,\dots, u_n)$ and $\bk = (k_1, k_2, \dots, k_n)$,  then \begin{align*}
    \braket{W(\bu)\bk}{\bm{\bk}} &=  \braket{W(u_1)\otimes W(u_2)\otimes \cdots \otimes W(u_n)k_1\otimes k_2\otimes \cdots \otimes k_n}{k_1\otimes k_2\otimes \cdots \otimes k_n}\\
    &=\Pi_{j=1}^n \braket{W(u_j)k_j}{k_j}
\end{align*} Substituting back in \eqref{eq:intermediate}, we see that $D_\alpha(\rho||\sigma)<\infty$ implies that \[ \sum\limits_{(k_1,\dots, k_n) \in \BZ_{\geq 0}^n}e^{-\sum_j k_j (\alpha r_j  +(1-\alpha)s_j)}\prod_{j=1}^n \abs{\braket{W(u_j)k_j}{k_j}}^2<\infty.\]  Then we have \begin{align*}
   \prod_{i=1}^n \sum\limits_{{k_i} \in \BZ_{\geq0}}e^{- {k_i} (\alpha r_{i}  +(1-\alpha)s_{i})} \abs{\braket{W(u_{i})k_{i}}{k_{i}}}^2<\infty.
\end{align*}
By Proposition \ref{weyl-diagonal}, we know there exists $C_i>0$ such that 
\begin{align*}
|\langle W(u_{i}) k_{i}| k_{i}\rangle|^2 \geq\frac{C_i}{k_{i}^{\frac{3}{4}}}, 
\end{align*}
for infinitely many $k_i \in \mathbb{N}$, $\quad i=1,2,\dots, n$. Therefore we have \[\sum\limits_{{k_i} \in \BZ_{\geq0}}e^{- {k_i} (\alpha r_{i}  +(1-\alpha)s_{i})}\frac{C_i}{k_{i}^{\frac{3}{4}}} <\infty, \quad i=1,2,\dots,n.\] Therefore, \[\alpha r_{i}  +(1-\alpha)s_{i}>0, \quad i=1,2,\dots, n.\]
Now the equivalences in \eqref{eq:proof} show that the last displayed equality is equivalent to $V_{\sigma(\alpha-1)}>V_{\rho(\alpha)}$ because the covariance matrix of $\frac{\gamma(\br)^\alpha}{\tr \gamma(\br)^\alpha}$ is same as that of $\frac{\rho^\alpha}{\tr \rho^\alpha}$, (since the operations of raising to a positive power and displacing commute with each other).
\end{proof}
As we discussed earlier, Seshadreesan, Lami and Wilde \cite[Theorem 19]{Ses-Lam-Wil-2018}
proved that, \[V_{\sigma(\alpha-1)}>V_{\rho(\alpha)}\Rightarrow D_{\alpha}(\rho||\sigma)<\infty.\]
Combining this with our previous theorem, we obtain the following result.
\begin{cor}\label{cor:slw}
    Under the assumptions of Theorem \ref{thm:converse-SLW}, we have \begin{align}
      D_{\alpha}(\rho||\sigma)<\infty \Leftrightarrow V_{\sigma(\alpha-1)}>V_{\rho(\alpha)}.
  \end{align}
\end{cor}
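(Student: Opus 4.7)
The plan is a direct assembly from two implications that are already in hand, so the proof will be quite short. For the forward direction, $D_\alpha(\rho||\sigma) < \infty \Rightarrow V_{\sigma(\alpha-1)} > V_{\rho(\alpha)}$, I would simply appeal to Theorem \ref{thm:converse-SLW}, which has just been established. For the reverse direction, $V_{\sigma(\alpha-1)} > V_{\rho(\alpha)} \Rightarrow D_\alpha(\rho||\sigma) < \infty$, I would invoke Theorem 19 of Seshadreesan, Lami and Wilde \cite{Ses-Lam-Wil-2018}, which proves this implication for any pair of faithful gaussian states.

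To legitimately apply the SLW theorem one only needs to observe that the displaced thermal states $\rho = W(\bu_1)\gamma(\br)W(\bu_1)^\dagger$ and $\sigma = W(\bu_2)\gamma(\bs)W(\bu_2)^\dagger$ in the hypothesis are faithful gaussian states: conjugation by a Weyl unitary preserves both gaussianity and faithfulness, and the underlying thermal states $\gamma(\br), \gamma(\bs)$ are faithful by the standing assumption $0 < r_i, s_i < \infty$ for every $i$. Their covariance matrices are the positive definite diagonal matrices recorded in Remark \ref{rmk:thermal}\ref{rmk-thermal-covariance-1}, and these are exactly the objects entering the inequality $V_{\sigma(\alpha-1)} > V_{\rho(\alpha)}$ (in the normalization used in Lemma \ref{lem:slw}). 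Chaining the two implications then yields the biconditional.

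I do not anticipate any genuine obstacle. All of the analytic substance sits on either side of the combination: Theorem \ref{thm:converse-SLW} supplies the nontrivial direction using the spectral decomposition \eqref{eq:displaced-thermal-spectral} together with the Fejér-type diagonal lower bound from Proposition \ref{weyl-diagonal}, while \cite[Theorem 19]{Ses-Lam-Wil-2018} furnishes the easier direction. The only bookkeeping point worth flagging is that the covariance matrix of $\rho^\alpha/\tr\rho^\alpha$ coincides with that of $\gamma(\br)^\alpha/\tr\gamma(\br)^\alpha$, because raising to a positive power and Weyl displacement commute in the sense that they affect eigenvalues and mean vector independently; this was already used at the end of the proof of Theorem \ref{thm:converse-SLW}. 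The corollary is thus the clean statement of the Seshadreesan--Lami--Wilde conjecture verified in the displaced thermal class.
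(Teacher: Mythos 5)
Your proposal is correct and matches the paper's argument exactly: the corollary is obtained by combining the implication of Theorem \ref{thm:converse-SLW} with the converse implication from \cite[Theorem 19]{Ses-Lam-Wil-2018}. The additional remarks you make about faithfulness and covariance matrices being preserved under Weyl conjugation are consistent with what the paper uses.
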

Now we present the main result of our article which provides a precise range of $\alpha$ such that $D_\alpha(\rho||\sigma)<\infty$, for a given pair of displaced faithful  thermal states $\rho$ and $\sigma$. Note that Theorem \ref{thm:not-necessarily-faithful} does not have the assumption of faithfulness but also does not allow displacements. The next theorem allows displacements but requires faithfulness.
\begin{thm}\label{thm:main-faithful}
 Let $\rho$ and $\sigma$ be displaced faithful thermal states given by  \begin{align*}
        \rho&= W(\bu_1)\gamma(\br)W(\bu_1)^{\dagger},\\
        \sigma& = W(\bu_2)\gamma(\bs)W(\bu_2)^{\dagger} 
    \end{align*}
where $\bu_1 , \bu_2 \in \BC^{n}$, $\gamma(\br) =\otimes_{i=1}^n \gamma(r_i)$,  and $\gamma(\bs)=\otimes_{j=1}^n\gamma(s_j)$.  Then we have the following: \begin{align}
      D_{\alpha}(\rho||\sigma)<\infty \Leftrightarrow \alpha < \min \left\{ \frac{s_j}{s_j-r_j}: j \in \{ 1, \ldots , n \} \text{ such that } r_j<s_j \right\}, 
  \end{align} 
  where we adopt the convention that the minimum of an empty set is equal to infinity.
\end{thm}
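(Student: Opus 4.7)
The plan is to prove Theorem \ref{thm:main-faithful} by directly assembling the two bridges that the preceding section has already laid down: Corollary \ref{cor:slw}, which converts the finiteness of $D_\alpha(\rho\|\sigma)$ into the strict covariance inequality $V_{\sigma(\alpha-1)}>V_{\rho(\alpha)}$ for faithful displaced thermal states, and Lemma \ref{lem:slw}, which translates that covariance inequality into the explicit arithmetic condition on $\alpha$. So no fresh analytic input is needed; the work is purely bookkeeping.

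First I would observe that, since displacement operators are unitary and commute with the operation of raising to a positive power (in the sense that $W(\bu)\rho^\alpha W(\bu)^{\dagger}=(W(\bu)\rho W(\bu)^{\dagger})^\alpha$ and taking the trace is invariant under unitary conjugation), the state $\rho^\alpha/\tr\rho^\alpha$ is simply the displacement by $\bu_1$ of $\gamma(\br)^\alpha/\tr\gamma(\br)^\alpha=\gamma(\alpha\br)$. Since Weyl conjugation leaves the covariance matrix unchanged, the covariance matrix $V_{\rho(\alpha)}$ coincides with the covariance matrix of $\gamma(\alpha\br)$, which by Remark \ref{rmk:thermal}\eqref{rmk-thermal-covariance-2} is
\[
V_{\rho(\alpha)} = \tfrac{1}{2}\bmqty{\dmat{\coth\tfrac{\alpha r_1}{2},\ddots,\coth\tfrac{\alpha r_n}{2}}}\otimes I_2,
\]
and analogously for $V_{\sigma(\alpha-1)}$ with $\bs$ in place of $\br$ and $\alpha-1$ in place of $\alpha$. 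These are precisely the matrices appearing in Lemma \ref{lem:slw}, so the hypotheses of that lemma are in force for our displaced thermal states.

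Next I would apply Corollary \ref{cor:slw} to the pair $(\rho,\sigma)$ to obtain the equivalence
\[
D_\alpha(\rho\|\sigma)<\infty \;\Longleftrightarrow\; V_{\sigma(\alpha-1)}>V_{\rho(\alpha)},
\]
and then chain this with the equivalence supplied by Lemma \ref{lem:slw}:
\[
V_{\sigma(\alpha-1)}>V_{\rho(\alpha)} \;\Longleftrightarrow\; \alpha<\min\!\left\{\tfrac{s_j}{s_j-r_j}: j\in\{1,\dots,n\}\text{ with } r_j<s_j\right\}.
\]
Composing these yields the claimed biconditional, with the usual convention that the minimum of an empty set is $+\infty$ (so that the right-hand inequality is vacuous exactly when $r_j\geq s_j$ for every $j$, giving finiteness for all $\alpha>1$).

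The only point deserving any care is the $\alpha\in(0,1]$ regime, which is not addressed by Corollary \ref{cor:slw}; however, for $\alpha\in(0,1)$ the finiteness of the Petz--R\'enyi $\alpha$-relative entropy between any two gaussian states is already classical (as recalled at the start of Subsection \ref{sec:thermal}), while the bound $\min\{s_j/(s_j-r_j):r_j<s_j\}$ is always strictly greater than $1$, so the equivalence holds trivially there. Thus no genuine obstacle arises, and the theorem follows by a clean two-step assembly of Corollary \ref{cor:slw} and Lemma \ref{lem:slw}.
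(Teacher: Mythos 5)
Your proposal is correct and takes essentially the same route as the paper: invoke Corollary \ref{cor:slw} to convert finiteness of $D_\alpha(\rho\|\sigma)$ into the covariance inequality, observe that Weyl conjugation leaves covariance matrices unchanged so that Lemma \ref{lem:slw} applies, and chain the two equivalences. Your extra care about the $\alpha\in(0,1)$ regime is a point the paper's one-line proof leaves implicit, but it does not change the argument.
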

\begin{proof} By Corollary \ref{cor:slw} $ D_{\alpha}(\rho||\sigma)<\infty \Leftrightarrow V_{\sigma(\alpha-1)}>V_{\rho(\alpha)}$. Since the covariance matrix of $\frac{\gamma(\br)^\alpha}{\tr \gamma(\br)^\alpha}$ is the same as that of $\frac{\rho^\alpha}{\tr \rho^\alpha}$, Lemma \ref{lem:slw}  completes the proof.
\end{proof} 
\appendix
\section{Appendix}
\subsection*{Single Mode Weyl Unitary Operators}
The classical Laguerre polynomials play an important role in our analysis of Weyl unitary operators in $1$-mode.
The Laguerre polynomials are defined as \begin{align}\label{eq:laguerre}
    L_j(x)=\sum_{k=0}^j
\binom{j}{k}\frac{(-1)^k}{k !}x^k, \quad j=0,1,2,\dots.
\end{align}
The following proposition is available in \cite{jencova-petz-pitrik-2010} with a different proof.
\begin{prop}\label{prop:weyl-leguerre}\cite{jencova-petz-pitrik-2010}
Let $L_j$ denote the $j$-th Laguerre polynomial, $j\in \BZ_{\geq 0}$. Then the $1$-mode Weyl unitary operator at $u\in \BC$, satisfy  \[\mel{j}{W(u)}{j} = \exp{-\frac{1}{2}\abs{u}^2}L_j(\abs{u}^2).\] 
\end{prop}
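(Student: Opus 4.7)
The plan is to compute the diagonal matrix element by first producing a closed-form generating function for $\langle e(w) \mid W(u) \mid e(v)\rangle$ from the definition of $W(u)$, and then reading off the $(j,j)$-entry as a coefficient in this bivariate power series. The only external analytic input required is the Laguerre series $L_j(x)=\sum_{k=0}^j \binom{j}{k}\frac{(-1)^k}{k!}x^k$ itself.

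First, I would combine the defining identity $W(u)\ket{e(v)} = \exp\!\left(-\tfrac12|u|^2 - \bar u v\right)\ket{e(u+v)}$ with the exponential-vector inner product $\langle e(w)\mid e(v)\rangle = e^{\bar w v}$ to obtain the closed form
\begin{equation*}
\langle e(w) \mid W(u) \mid e(v) \rangle \;=\; \exp\!\left(-\tfrac12|u|^2 + \bar w u + (\bar w - \bar u)v\right).
\end{equation*}
Since in one mode $\ket{e(v)} = \sum_{k\ge 0} \frac{v^k}{\sqrt{k!}}\ket{k}$, the left-hand side also equals $\sum_{k,\ell \ge 0} \frac{\bar w^{k} v^{\ell}}{\sqrt{k!\,\ell!}}\langle k \mid W(u) \mid \ell \rangle$, and so the diagonal element can be recovered by a mixed derivative:
\begin{equation*}
\langle j \mid W(u) \mid j \rangle \;=\; \frac{1}{j!}\left.\frac{\partial^{2j}}{\partial \bar w^{\,j}\,\partial v^{j}}\right|_{\bar w = v = 0} \exp\!\left(-\tfrac12|u|^2 + \bar w u + (\bar w - \bar u)v\right).
\end{equation*}

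Second, I would carry out the two differentiations. Differentiating $j$ times in $v$ and setting $v=0$ pulls down a factor $(\bar w - \bar u)^{j}$ and leaves $e^{\bar w u}$; after factoring out $e^{-|u|^2/2}$, what remains is
\begin{equation*}
\left.\frac{\partial^{j}}{\partial \bar w^{\,j}}\right|_{\bar w = 0}\! e^{\bar w u}(\bar w - \bar u)^{j}.
\end{equation*}
An application of the Leibniz rule gives $\sum_{k=0}^{j}\binom{j}{k} u^{k}\cdot \frac{j!}{k!}(-\bar u)^{k} = j!\sum_{k=0}^{j}\binom{j}{k}\frac{(-|u|^{2})^{k}}{k!}$, which is exactly $j!\,L_j(|u|^{2})$ by the definition of the Laguerre polynomial. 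Dividing by $j!$ and restoring $e^{-|u|^2/2}$ yields the claimed formula.

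The main obstacle is essentially bookkeeping: one must keep track of how the two derivatives interact through the Leibniz expansion and recognize the resulting finite sum as the Laguerre polynomial. No deep analytic input is needed; the content of the statement is the classical generating function for $L_j$ reinterpreted through the Fock-space/exponential-vector formalism. An alternative route, should the above computation be considered indirect, is to invoke the normal-ordered form $W(u) = e^{-|u|^2/2}\, e^{u a^{\dagger}}\, e^{-\bar u a}$, expand each exponential using the action of $a$ and $a^{\dagger}$ on $\ket{j}$ in the particle basis, and collapse the double sum via the orthogonality constraint $j-m=j-k$; one again ends up with $\sum_{k=0}^{j}\binom{j}{k}\frac{(-|u|^2)^{k}}{k!}$.
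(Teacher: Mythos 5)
Your proposal is correct. Both your argument and the paper's start from the same defining identity $W(u)\ket{e(v)}=\exp\{-\tfrac12\|u\|^2-\langle u|v\rangle\}\ket{e(u+v)}$, and both ultimately reduce the matrix element to the finite sum $\sum_{k=0}^{j}\binom{j}{k}\frac{(-|u|^2)^k}{k!}$, but the routes differ in a meaningful way. The paper works with the vector-valued series: it expands $\ket{e(u+tv)}$ via the binomial expansion of $(u+tv)^{\otimes \ell}$, identifies the coefficient of $t^j$ to obtain an explicit expression for $W(u)v^{\otimes j}$ as a double sum of tensor products, and only then pairs with $\bra{v^{\otimes j}}$, using orthogonality ($\delta_{j\ell}$) to collapse the sum. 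You instead pair with an exponential vector on \emph{both} sides first, exploiting the reproducing identity $\langle e(w)|e(v)\rangle=e^{\bar w v}$ to get the closed scalar generating function $\exp\left(-\tfrac12|u|^2+\bar w u+(\bar w-\bar u)v\right)$, and then extract the $(j,j)$ entry by a mixed derivative and a single application of the Leibniz rule. Your normalization is right (the coefficient of $\bar w^j v^j$ in the double series is $\langle j|W(u)|j\rangle/j!$, matching your $\frac{1}{j!}\partial^{2j}$ prefactor), and the Leibniz computation $\sum_{k=0}^j\binom{j}{k}u^k\frac{j!}{k!}(-\bar u)^k=j!\,L_j(|u|^2)$ is correct. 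What your route buys is the elimination of the tensor-power bookkeeping and the interchange of infinite sums that the paper has to track; the only point you should make explicit is that term-by-term coefficient identification in the double series is justified, which is immediate since $|\langle k|W(u)|\ell\rangle|\le 1$ makes the series absolutely convergent and $F$ entire in $(\bar w,v)$. Your alternative via the normal-ordered form $W(u)=e^{-|u|^2/2}e^{ua^\dagger}e^{-\bar u a}$ is also standard and would work equally well.
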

\begin{proof} 
For clarity of writing in the proof, we denote the particle basis vector $\ket{j}$ of $\Gamma(\BC)$  as $f_j$.   Notice that $f_i\otimes f_j = f_{i+j}$ for $i,j\geq 0$.  Note also that in the one-mode case $u=uf_1$ for all $u\in \BC$.  So, we have \begin{align*}
    (u+t f_1)^{\otimes^\ell}=(u+t)^\ell \left(f_1\right)^{\otimes^\ell} = \sum_{m=0}^{\ell}\binom{\ell}{m}u^{\ell-m}t^mf_{\ell}, \quad \forall u\in \BC, t\in \BR.
\end{align*}
Now, by the definition of exponential vectors and the action of Weyl unitary operators on the exponential vectors \eqref{eq:weyl-exp}, 
    \begin{align*}
\sum_{j=0}^{\infty} t^j W(u) \frac{f_j}{\sqrt{j !}}&=W(u) \ket{e(t f_1)}\\&=\exp \left\{-\frac{1}{2}|u|^2- \bar{u}t \right\} \ket{e(u+t f_1)}, \quad u\in \BC, t\in \BR\\
&= \exp \left\{-\frac{1}{2}|u|^2- \bar{u}t \right\} \sum_{\ell=0}^\infty \left(  \frac{\sum_{m=0}^{\ell}\binom{\ell}{m}u^{\ell-m}t^m}{\sqrt{\ell!}}\right)f_{\ell}\\
& =\exp \left\{-\frac{1}{2}|u|^2\right\}\left(\sum_{k=0}^{\infty}(-1)^k \frac{(\bar{u}t)^k}{k !} \right)  \sum_{m=0}^{\infty}\sum_{\ell=m}^{\infty}\frac{\binom{\ell}{m} u^{{\ell-m}} t^{m} }{\sqrt{\ell !}}f_{\ell}, \quad\forall u\in \BC, t\in \BR.
\end{align*}
Identifying the coefficient of $t^j$ on both sides of the equation above and writing $m+k=j$, we have
\begin{align}\label{eq:wu-v-tensor}
W(u) f_j=\sqrt{j !} \exp \left\{-\frac{1}{2}|u|^2\right\} \left(\sum_{\{m,k \geq 0\mid m+k = j\}} (-1)^k \frac{ \bar{u}^k}{k !}\right)\left(\sum_{\ell=m}^{\infty} \frac{\binom{\ell}{m} u^{{\ell-m}} }{\sqrt{\ell !}}f_{\ell}\right).
\end{align}
Hence, from \eqref{eq:wu-v-tensor}, we get
\begin{align*}
\left\langle f_j \mid W(u) f_j\right\rangle &= \sqrt{j !} \exp \left\{-\frac{1}{2}|u|^2\right\}\left(\sum_{\{m,k \geq 0\mid m+k = j\}} (-1)^k \frac{ \bar{u}^k}{k !}\right) \sum_{\ell=m}^{\infty}\binom{\ell}{m} \frac{u^{\ell-m}}{\sqrt{\ell !}}\left\langle f_j | f_{\ell}\right\rangle\\
&= \sqrt{j !} \exp \left\{-\frac{1}{2}|u|^2\right\}\left(\sum_{\{m,k \geq 0\mid m+k = j\}} (-1)^k \frac{ \bar{u}^k}{k !}\right) \sum_{\ell=m}^{\infty}\binom{\ell}{m} \frac{u^{\ell-m}}{\sqrt{\ell !}}\delta_{j,\ell}\\
&= \sqrt{j !} \exp \left\{-\frac{1}{2}|u|^2\right\}\left(\sum_{\{m,k \geq 0\mid m+k = j\}} (-1)^k \frac{ \bar{u}^k}{k !}\right) \binom{j}{m} \frac{u^{j-m}}{\sqrt{j !}}\\
&= \sqrt{j !} \exp \left\{-\frac{1}{2}|u|^2\right\}\left(\sum_{\{m,k \geq 0\mid m+k = j\}} (-1)^k \frac{ \bar{u}^k}{k !} \binom{j}{j-k} \frac{u^{k}}{\sqrt{j !}}\right)\\
& =\sqrt{j !} \exp \left\{-\frac{1}{2}|u|^2\right\} \sum_{k=0}^j \frac{(-1)^k\bar{u}^k}{k !}\binom{j}{k}\frac{u^k}{\sqrt{j !}} \\
& =\exp \left\{-\frac{1}{2}|u|^2\right\} \sum_{k=0}^j(-1)^k\binom{j}{k} \frac{\left(|u|^2\right)^k}{k !}.
\end{align*}
Thus from \eqref{eq:laguerre}, we get 
\[\left\langle {j} \mid W(u) \mid { j}\right\rangle =\exp \left\{-\frac{1}{2}|u|^2\right\} \sum_{k=0}^j
\binom{j}{k}\frac{(-1)^k}{k !}(|u|^2)^k =\exp{-\frac{1}{2}\abs{u}^2}L_j(\abs{u}^2). \]
\end{proof}

\begin{lem}\label{lem:analysis} For every $u\in \BC$, there is an infinite sequence of positive integers $j$ such that
\[|\sin (2 \sqrt{j}|u|+\pi / 4)| \geqslant \frac{1}{\sqrt{2}}.\]
\end{lem}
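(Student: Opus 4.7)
The plan is to reduce the inequality to a statement about the distribution of $\alpha\sqrt{j}$ modulo $\pi$, where $\alpha := 2|u|$, and then exploit the fact that the successive gaps $\alpha(\sqrt{j+1}-\sqrt{j})$ tend to zero while their sum diverges.

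If $u=0$ then $|\sin(\pi/4)| = 1/\sqrt{2}$ and every positive integer $j$ witnesses the inequality, so assume $u \neq 0$ and set $\alpha := 2|u| > 0$. The inequality $|\sin(\alpha\sqrt{j}+\pi/4)| \geq 1/\sqrt{2}$ is equivalent to $\alpha\sqrt{j} + \pi/4 \pmod{\pi}$ lying in $[\pi/4,3\pi/4]$, i.e.\ to $\theta_j := \alpha\sqrt{j} \bmod \pi$ belonging to $[0,\pi/2]$. So the task is to produce infinitely many $j$ with $\theta_j \in [0,\pi/2]$.

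Now introduce the consecutive gap $g_j := \alpha(\sqrt{j+1}-\sqrt{j}) = \alpha/(\sqrt{j+1}+\sqrt{j})$. Two quantitative facts drive the argument: first, $g_j \to 0$ as $j\to\infty$, so we may fix $J$ with $g_j < \pi/2$ for every $j \geq J$; second, $\sum_{j\geq 1} g_j$ diverges, since it is the telescoped expression $\alpha(\sqrt{N+1}-1) \to \infty$. Argue by contradiction: if only finitely many $j$ satisfy $\theta_j \in [0,\pi/2]$, then there exists $J' \geq J$ such that $\theta_j \in (\pi/2,\pi)$ for all $j \geq J'$. For such $j$ the update $\theta_{j+1} = (\theta_j + g_j) \bmod \pi$ offers two options. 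If $\theta_j + g_j \geq \pi$ then, because $\theta_j < \pi$ and $g_j < \pi/2$, one has $\theta_j + g_j < 3\pi/2$, whence $\theta_{j+1} = \theta_j + g_j - \pi \in [0,\pi/2)$, contradicting the supposed confinement. Therefore $\theta_{j+1} = \theta_j + g_j$ for every $j \geq J'$, so $\theta_j = \theta_{J'} + \sum_{k=J'}^{j-1} g_k \to \infty$, clashing with $\theta_j < \pi$.

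The main obstacle is not technical difficulty but choosing the right bookkeeping: one must make the wrap-around case produce an explicit value in $[0,\pi/2)$, and the bound $g_j < \pi/2$ is exactly what prevents overshoot by more than $\pi/2$ and forces the wrap-around image to fall inside the desired interval. The divergence of $\sum g_j$, a consequence of the $1/\sqrt{j}$-decay of the gaps, is then all that is needed to close the contradiction.
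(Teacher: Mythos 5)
Your proof is correct, but it takes a genuinely different route from the paper's. The paper argues directly in $j$-space: it solves the inequality $m\pi + \pi/4 \leq 2\sqrt{j}\,|u| + \pi/4 \leq m\pi + 3\pi/4$ for $j$, obtaining explicit intervals $I_m$ whose lengths tend to infinity with $m$, and then picks one integer from each of infinitely many such intervals. You instead work in angle space, tracking $\theta_j = 2|u|\sqrt{j} \bmod \pi$ and running a contradiction argument on its dynamics: since the increments $g_j$ eventually stay below $\pi/2$, any wrap-around past $\pi$ would land $\theta_{j+1}$ in $[0,\pi/2)$, so permanent confinement to $(\pi/2,\pi)$ forces the orbit never to wrap, which is impossible because $\sum g_j$ diverges. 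Both arguments ultimately rest on the same two facts --- the gaps $2|u|(\sqrt{j+1}-\sqrt{j})$ vanish while their partial sums diverge --- but the paper's version is constructive and yields explicit witnesses (roughly one $j$ per band, and in fact many once the bands are long), whereas yours is shorter on computation, handles $u=0$ explicitly (the paper tacitly divides by $|u|$), and adapts immediately to any target subinterval of positive length and any phase sequence with vanishing gaps and divergent total increment. Either proof is a valid replacement for the other.
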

\begin{proof}
    Observe that 
\begin{align*}
\begin{aligned}
& |\sin (2 \sqrt{j}\abs{u}+\pi / 4)| \geqslant \frac{1}{\sqrt{2}} \\&\phantom{............} \Leftrightarrow \text{there exists } \, m \in \BZ \text{ such that }  2 \sqrt{j}|u|+\frac{\pi}{4} \in\left[m \pi+\frac{\pi}{4}, m \pi+ \frac{3\pi}{4}\right].
\end{aligned}
\end{align*}
Notice that
\begin{align*}
\begin{aligned}
& m \pi+\frac{\pi}{4} \leqslant 2 \sqrt{j}|u|+\frac{\pi}{4} \leqslant m \pi+\frac{3 \pi}{4} 
\Leftrightarrow \left(\frac{m \pi}{2|u|}\right)^2 \leqslant j \leqslant\left(\frac{m \pi}{2|u|}\right)^2+\frac{m \pi^2}{4|u|^2}+\frac{\pi^2}{16| u|^2}.
\end{aligned}
\end{align*}
Since both $\frac{m \pi^2}{4|u|^2}+\frac{\pi^2}{16| u|^2} \rightarrow \infty$  and $\left(\frac{m \pi}{2|u|}\right)^2 \rightarrow \infty$ as $m\rightarrow \infty$,
we can find an infinite sequence of positive integers $m$ such that the intervals\[I_m=\left[\left(\frac{m \pi}{2|u|}\right)^2,\left(\frac{m \pi}{2|u|}\right)^2+\frac{m \pi^2}{4|u|^2}+\frac{\pi^2}{16| u|^2}\right]\]
 are pairwise disjoint and have lengths bigger than one. Therefore, for every such interval we can choose a positive integer $j$ in $I_m$. Thus we have an infinite sequence of $j$'s such that
\[|\sin (2 \sqrt{j}|u|+\pi / 4)| \geqslant \frac{1}{\sqrt{2}}.\]
\end{proof}
\begin{prop}\label{weyl-diagonal}
    For every  $u\in \BC$, there exists $C>0$ such that \[\abs{\mel{j}{W(u)}{j}}\geq \frac{C}{j^{\frac{3}{8}}},\] for infinitely many $j\in \BZ_{\geq 0}$.
\end{prop}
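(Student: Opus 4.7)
The plan is to combine Proposition \ref{prop:weyl-leguerre}, which identifies $\langle j|W(u)|j\rangle$ with $e^{-|u|^2/2} L_j(|u|^2)$, with the classical Fej\'er asymptotic for Laguerre polynomials on the positive real axis, and finally extract infinitely many good indices $j$ by invoking Lemma \ref{lem:analysis}. First, I would dispose of the trivial case $u=0$: then $W(0)=I$ and $\langle j|W(0)|j\rangle=1$ for every $j$, so the inequality holds with any $C\le 1$. Hence from now on assume $u\neq 0$ and write $x:=|u|^2>0$.

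Next I would recall Fej\'er's asymptotic: for fixed $x>0$,
\[
e^{-x/2}L_j(x)=\frac{\cos\bigl(2\sqrt{jx}-\pi/4\bigr)}{\sqrt{\pi}\,(jx)^{1/4}}+O\!\left(j^{-3/4}\right)\qquad (j\to\infty).
\]
Using the identity $\cos(\theta-\pi/4)=\sin(\theta+\pi/4)$ and substituting $\sqrt{jx}=\sqrt{j}\,|u|$, this becomes
\[
e^{-x/2}L_j(x)=\frac{\sin\bigl(2\sqrt{j}\,|u|+\pi/4\bigr)}{\sqrt{\pi}\,|u|^{1/2}\,j^{1/4}}+O\!\left(j^{-3/4}\right).
\]
By Proposition \ref{prop:weyl-leguerre} the left side equals $\langle j|W(u)|j\rangle$, so taking absolute values gives a lower bound of the form
\[
\bigl|\langle j|W(u)|j\rangle\bigr|\;\ge\;\frac{\bigl|\sin(2\sqrt{j}|u|+\pi/4)\bigr|}{\sqrt{\pi}\,|u|^{1/2}\,j^{1/4}}-\frac{M}{j^{3/4}}
\]
for some constant $M=M(u)>0$ and all sufficiently large $j$.

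Now I would apply Lemma \ref{lem:analysis}: there is an infinite sequence of positive integers $j$ along which $|\sin(2\sqrt{j}\,|u|+\pi/4)|\ge 1/\sqrt{2}$. Along this sequence the main term is at least $(\sqrt{2\pi}\,|u|^{1/2})^{-1} j^{-1/4}$, which dominates the error $M j^{-3/4}$ as $j\to\infty$. Consequently, for all sufficiently large $j$ in this sequence we obtain
\[
\bigl|\langle j|W(u)|j\rangle\bigr|\;\ge\;\frac{C}{j^{1/4}}\;\ge\;\frac{C}{j^{3/8}}
\]
with $C>0$ depending only on $u$, and by shrinking $C$ if necessary we absorb the finitely many small values of $j$ in the sequence. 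This yields the desired bound on an infinite set, actually with the stronger exponent $1/4$, which is more than enough for the application in Theorem \ref{thm:converse-SLW}. The only nontrivial ingredient is Fej\'er's asymptotic for Laguerre polynomials, which is a standard classical result; the rest is just bookkeeping combining this asymptotic with the elementary number-theoretic Lemma \ref{lem:analysis}.
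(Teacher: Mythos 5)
Your proposal is correct and follows essentially the same route as the paper: Proposition \ref{prop:weyl-leguerre} to reduce to Laguerre polynomials, Fej\'er's asymptotic, and Lemma \ref{lem:analysis} to pick out infinitely many indices where the oscillating factor is bounded below. Your version is marginally tidier in that you handle $u=0$ explicitly (the paper's Lemma \ref{lem:analysis} divides by $|u|$) and you note that the exponent $1/4$ already suffices, whereas the paper deliberately weakens it to $3/8$ via an $\epsilon$-argument.
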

\begin{proof}
    By Proposition \ref{prop:weyl-leguerre} it is enough to show that for every $u\in \BC$, there exists $C>0$ such that \[\abs{L_j(\abs{u}^2)}\geq \frac{C}{j^{\frac{3}{8}}}, \]
    for infinitely many $j\in \BZ_{\geq 0}$. By  Fejér's formula \cite[Theorem 8.22.1]{szego1939orthogonal} (alternatively \cite{wiki-Laguerre} taking $\alpha =0$),  we have for large $j$'s \[L_j(x)=\frac{j^{-\frac{1}{4}}}{\sqrt{\pi}} \frac{e^{\frac{x}{2}}}{x^{\frac{1}{4}}} \sin \left(2 \sqrt{j x}+\frac{\pi}{4}\right)+O\left(j^{-\frac{3}{4}}\right), \quad \forall x >0.\]
    Taking $x = \abs{u}^2$, we get
\begin{align*}
    L_j(\abs{u}^2)&=\frac{e^{\frac{\abs{u}^2}{2}}}{\sqrt{\pi\abs{u}}} \frac{1}{j^{\frac{1}{4}}} \sin \left(2 \sqrt{j}\abs{u}+\frac{\pi}{4}\right)+O\left(j^{-\frac{3}{4}}\right)
\end{align*}
Using Lemma \ref{lem:analysis} we see that there exists positive constants $C_1$ and $C_2$ (depending only on $u$) such that for infinitely many $j$'s
\begin{align*}
    L_j(\abs{u}^2) \geq \frac{e^{\frac{\abs{u}^2}{2}}}{\sqrt{2\pi\abs{u}}} \frac{1}{j^{\frac{1}{4}}} - \frac{C_2}{j^{\frac{3}{4}}} = \frac{C_1}{j^{\frac{1}{4}}}-\frac{C_2}{j^{\frac{3}{4}}} 
\end{align*}
Thus for every $\epsilon >0$, there exists $C>0$ such that 
\[L_j(\abs{u}^2) \geq \frac{C}{j^{\frac{1}{4}+\epsilon}},\]
for infinitely many $j$'s. We conclude the proof by choosing $\epsilon = 1/8.$
\end{proof}

\section*{Acknowledgement}The authors are grateful to  the referees whose suggestions greatly improved the article.   The second author wishes to acknowledge the Army Research Office MURI award `Theory and Engineering of Large-Scale Distributed Entanglement Quantum Network Science – QNS' awarded under grant number W911NF2110325 – ARO MURI. The second author also thanks the Fulbright Scholar Program and the United States-India Educational Foundation for providing funding to conduct part of this research through a Fulbright-Nehru Postdoctoral Fellowship (Award No. 2594/FNPDR/2020).

\noindent
\textbf{Competing Interests:} The authors do not have any competing interests.

\noindent
\textbf{Data Availability:}
 The data that supports the findings of this study are available within the article.

\bibliographystyle{IEEEtran}
\bibliography{bibliography}
\end{document}